\newcommand{\reals}{{\mathbb{R}}}
\newcommand{\argmin}{\mathop{\rm argmin}}
\newcommand{\mnorm}[1]{{\left\vert\kern-0.25ex\left\vert\kern-0.25ex\left\vert #1 
    \right\vert\kern-0.25ex\right\vert\kern-0.25ex\right\vert}}
\newcommand{\mc}{\mathcal}
\newtheorem{definition}{Definition} 
\newtheorem{lemma}{Lemma}
\title{Reducing Collision Risk in Multi-Agent Path Planning: Application to Air Traffic Management}
\author{
  Sarah H.Q. Li\\
  Department of Aeronautics and Astronautics Engineering\\
  University of Washington, Seattle, United States\\
  \texttt{sarahli@uw.edu} \\
  \AND
  Avi Mittal \\
  Department of Aeronautics and Astronautics Engineering\\
  University of Washington, Seattle, United States \\
  \texttt{avim@uw.edu} \\
   \And
   Pierre-Loïc Garoche \\
   École Nationale de l'Aviation Civile \\
   Université de Toulouse, Toulouse, France \\
   \texttt{Pierre-Loic.Garoche@enac.fr} \\
   \And 
    A{\c{c}}{\i}kme{\c{s}}e, Beh{\c{c}}et\\
   Department of Aeronautics and Astronautics Engineering \\
   University of Washington, Seattle, United States \\
   \texttt{behcet@uw.edu} \\
}
\begin{document}
\maketitle
\begin{abstract}
    To minimize collision risks in the multi-agent path planning problem with stochastic transition dynamics, we formulate a Markov decision process congestion game with a multi-linear congestion cost. Players within the game complete individual tasks while minimizing their own collision risks. We show that the set of Nash equilibria  coincides with the first order KKT points of a non-convex optimization problem. Our game is applied to a historical flight plan over France to reduce collision risks between commercial aircraft. 
\end{abstract}
\keywords{Markov decision process, game theory, air traffic management} 
\section{Introduction}
In robotics, aeronautics and warehouse logistics~\cite{ragi2013uav,trinh2020multi}, the operational dynamics are often inherently uncertain: delayed package arrivals may alter a warehouse's internal logistics, and quad-copters may be blown off its intended path by strong gusts of wind. 
In large-scale autonomy frameworks such as urban air mobility and automated warehouses, vehicles also experience \emph{congestion}---individual autonomous vehicles crowding the shared operational space.  Congestion can severely reduce system performance and requires inter-vehicle coordination to resolve. 
In~\cite{li2022congestion}, a potential game solution is proposed.
Using a heuristic to estimate work floor congestion,~\cite{li2022congestion} showed that multiple robots can share a work space with reduced collision risks.

Under uncertain operation dynamics, collision risks will always exist. While~\cite{li2022congestion} proves that individual task completion is optimal with respect to the congestion heuristic, no guarantee can be made on individual vehicle's collision risks. This prevents the adaptation of these path coordination games in safety-critical applications such air traffic management, in which government regulations require strict safety guarantees. To mitigate the lack of safety guarantee,  we consider an extension of the potential game model introduced in~\cite{li2022congestion,calderone2017markov} that directly minimizes the exact collision risk. By doing so, we can produce more rigorous guarantees on the individual vehicle's collision risks.

\textbf{Contributions}. Under the MDP congestion game framework~\cite{li2022congestion,calderone2017markov}, we propose a congestion model that directly weighs an autonomous vehicle's desire for task completion against its willingness to risk collisions. 
We show that this congestion cost has a potential function, so that the optimal multi-vehicle trajectory is the solution of a non-convex optimization problem with a multi-linear objective. 
We develop an in-depth game model of air traffic management using historical flight plans from the French air space and show that the Frank-Wolfe gradient descent method can find locally optimal solutions where individual flight's collision risk drops both in occurrence and in likelihood. 

\section{Markov decision process congestion game}
\textbf{Notation}. We use $\reals(\reals_+)$ to denote the real (non-negative real) numbers, $[N]$ to denote $\{1,\ldots N\}$, $\mc{T}$ to denote $\{0,\ldots,T\}$, and $\Delta_N = \{y \in \reals_+^N \ | \ \sum_{i} y_i = 1\}$ to denote the simplex in $\reals^N$. 
\subsection{Individual Markov decision process (MDP)}
The \emph{finite-horizon MDP} for player $i$ is given by $([S], [A], \mc{T}, P^i, C^i, p^i_0)$, where $[S]$ is the \textbf{finite set of states}, $[A] $ is the \textbf{finite set of actions}, $\mc{T}$ is the \textbf{finite time horizon}, $C^i \in \reals^{(T+1)SA}$ are the \textbf{state-action costs}, $P^i \in \reals_+^{TSSA}$ is the \textbf{transition dynamics}, and $p^i_0 \in \Delta_S$ is player's \textbf{initial state probability distribution}. Assume that each action $a \in [A]$ is admissible from each state $s \in [S]$. 

At time $t \in \mc{T}$ and state $s \in [S]$, player $i$ selects an action $a \in [A]$ and incurs a cost $C_{tsa} \in \reals$. At $t+1$, the player transitions to state $s'$ with probability $P_{ts'sa}\geq 0$. This is repeated for $t\in\mc{T}$. At $t=0$, player $i$'s probability of being in state $s$ is given by $p_{0s}$.

Player $i$'s \textbf{state-action distribution} is $x^i \in \reals_+^{(T+1)SA}$, where $x^i_{tsa}$ is player's joint probability of taking action $a$ at $(t,s) \in \mc{T}\times [S]$. The set of feasible MDP state-action distributions is given by 
\begin{equation}
\textstyle \mc{X}(P^i, p^i_0) = \big\{ z \in \reals_+^{(T+1)SA} \ | \ \sum_{a}z_{0sa} = p^i_{0s}, 
     \sum_{a}z_{(t+1)sa} = \sum_{a, s'}P^i_{tss'a}z_{ts'a}, \forall (t,s) \in \mc{T}\times[S]\big\}.
\end{equation}


Player $i$'s \textbf{Q-value function} $Q^i\in\reals^{(T+1)SA}$ is the expected incurred cost within the MDP~\cite[Chp.4.2.1]{puterman2014markov}. When player $i$ is at state $s$ and time $t$, $Q^i_{tsa}$ is the expected total cost player incurs from state $s$ and time $t$ if it first takes action $a$ and plays optimally thereafter.
\begin{equation}\label{eqn:q_value}
   \textstyle  Q^i_{Tsa} := C^i_{Tsa}, \ 
  \textstyle   Q^i_{(t-1)sa}:= C^i_{(t-1)sa} + \sum_{s'} P^i_{ts'sa}\underset{a'}{\min}\, Q^i_{t,s'a'}, \ 
\forall \  (t, s, a) \in [T]\times[S]\times[A]. 
\end{equation}
\subsection{Multi-player MDPs under collision risk-based congestion}
Inspired by autonomous vehicles sharing an operation space, we consider the scenario in which $N$ players each solve the MDP $([S],[A],\mc{T}, P^i, \ell^i, p^i_{0})$ for $i \in [N]$. Distinct from individual MDPs, the MDP costs $\ell^i: \reals^{N\times (T+1)SA} \mapsto \reals^{(T+1)SA}$ depend on all players' state-action distributions $(x^1, \ldots, x^N)$. We denote this joint state-action distribution as $x = (x^1, \ldots, x^N)$ and the resulting cost as $\ell^i(x)$. The players jointly solve an \textbf{MDP congestion game} under costs $(\ell^1,\ldots,\ell^N)$. 

\textbf{Probabilistic collision risks}. A player's collision risk at $(t,s)$ and $(t,s,a)$ are the probabilities that at least one other player is in the same state and state-action, respectively. 
\begin{lemma}
 Under $x=(x^1,\ldots, x^N)$, player $i$'s probability of encountering at least one other player in $s$ and $(s,a)$ at time $t$ are respectively denoted by $D^i_{ts}(x)$ and $G^i_{tsa}(x)$ and computed as
 \begin{equation}
   \textstyle   D^i_{ts}(x)  = 1 - \prod_{j\neq i}(1 - \sum_{a'} x^j_{tsa'}), \ G^i_{tsa}(x) = 1  - \prod_{j\neq i}(1 -  x^j_{tsa}) \ \forall \ i, t, s, a \in [N]\times [T]\times [S]\times [A].\label{eqn:DG_prob}
 \end{equation}
\end{lemma}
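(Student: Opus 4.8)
The plan is to interpret each marginal of $x^j$ as an occupancy probability and then build the ``at least one collision'' event as a union over the other players, exploiting the independence of distinct players' trajectories. The whole statement then reduces to the elementary complement-of-a-union identity for independent events.

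First I would fix the time index $t$, the state $s$, and the reference player $i$, and introduce the events $E^j_{ts} = \{\text{player } j \text{ occupies state } s \text{ at time } t\}$ for each $j \neq i$. By the definition of the state-action distribution, $x^j_{tsa}$ is the joint probability that player $j$ is at $(t,s)$ and selects action $a$; marginalizing over actions gives $\Prob(E^j_{ts}) = \sum_{a'} x^j_{tsa'}$, which is exactly the factor appearing inside the product in~\eqref{eqn:DG_prob}.

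Next I would express ``player $i$ encounters at least one other player in $s$ at time $t$'' as the union $\bigcup_{j\neq i} E^j_{ts}$, so that $D^i_{ts}(x) = \Prob\big(\bigcup_{j\neq i}E^j_{ts}\big) = 1 - \Prob\big(\bigcap_{j\neq i}(E^j_{ts})^c\big)$. The crucial step is that, under the model, the players sample their trajectories from their own MDPs $([S],[A],\mc{T},P^j,\ell^j,p^j_0)$ independently of one another; hence for a fixed $(t,s)$ the events $\{E^j_{ts}\}_{j\neq i}$ are mutually independent. This factorizes the probability of the intersection of complements, giving $\Prob\big(\bigcap_{j\neq i}(E^j_{ts})^c\big) = \prod_{j\neq i}\big(1 - \sum_{a'}x^j_{tsa'}\big)$, and taking the complement yields the stated formula for $D^i_{ts}(x)$. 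The identical argument applied to the finer events $\tilde E^j_{tsa} = \{\text{player } j \text{ occupies } (s,a) \text{ at time } t\}$, whose probability is directly $\Prob(\tilde E^j_{tsa}) = x^j_{tsa}$, produces the formula for $G^i_{tsa}(x)$.

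I expect the only genuine subtlety to be the independence justification: the congestion costs $\ell^j$ couple the players through the objective, but the randomness in each player's realized position comes solely from its own transition kernel $P^j$ and initial distribution $p^j_0$, so conditioned on the policies encoded by $x$, the occupancy events across players are independent. Making this explicit --- that $x$ describes a product measure over players at each fixed time --- is what makes the product form legitimate; the remaining work is just the identity $\Prob(\cup_j A_j) = 1 - \prod_j(1 - \Prob(A_j))$ for independent events, instantiated once for states and once for state-action pairs.
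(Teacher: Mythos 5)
Your proposal is correct and follows essentially the same route as the paper's proof: interpret $x^j_{tsa}$ (or its action-marginal $\sum_{a'}x^j_{tsa'}$) as an occupancy probability, take the complement, multiply over $j\neq i$, and complement again. The only difference is that you explicitly justify the product step via independence of players' realized trajectories given their fixed strategies, a point the paper uses implicitly without comment --- a worthwhile clarification, but not a different argument.
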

\begin{proof}
 The probability of player $j$ taking state-action $(s, a)$ at time $t$ is $x^j_{tsa}$. The probability that player $j$ does \textit{not} take state-action $(s, a)$ at time $t$ is $1 - x^j_{tsa}$. The probability that \textit{none} of the players $j \neq i$ take state-action $(s, a)$ at time $t$ is $\textstyle \prod_{j \neq i} (1 - x^j_{tsa})$. The probability of \textit{at least one} other player $j \neq i$ taking state-action $(s, a)$ at time $t$ is given by $G^i_{tsa}(x)$ in~\eqref{eqn:DG_prob}. To derive $\textstyle D^i_{ts}(x)$~\eqref{eqn:DG_prob}, we apply similar arguments to the probability of player $j$ being in state $s$ at time $t$, given by $\textstyle \sum_a{x^j_{tsa}}$.
\end{proof}
As shown in Section~\ref{sec:flight_model}, $D^i$ and $G^i$ are flight separation constraints in air traffic management.

\textbf{Collision risk-based congestion}. We augment players' individual costs $C^i$ with $D^i(x)$ and $G^i(x)$. \begin{equation}\label{eqn:congestion_cost}
    \ell^i_{tsa}(x) = C^i_{tsa} + k\big(D^i_{ts}(x) + G^i_{tsa}(x)\big), \ \forall (t,s,a) \in \mc{T}\times[S]\times[A],
\end{equation}
where $k \in \reals_+$ is a user-defined parameter that signifies the players' willingness to risk collisions. Players are collision ignorant at $k=0$, and collision-averse at $k\rightarrow\infty$. Unique from~\cite{calderone2017markov,li2022congestion}, $\ell^i$~\eqref{eqn:congestion_cost} is independent of $x^i$; when player $i$'s opponents fix their strategies, player $i$ solves a standard MDP. 

When all players simultaneously achieve the minimum $Q^i(x)$, $x$ is a Nash equilibrium. 
\begin{definition}[Nash equilibrium]\label{def:NE}\cite{li2022congestion}
The state-action distribution $x$ is a Nash equilibrium if every player exclusively takes actions that minimize their $Q$-value function, $Q^i(x)$~\eqref{eqn:q_value}. 
\begin{equation}\label{eqn:nash}
   \textstyle  x^i_{tsa}> 0 \Rightarrow a \in \argmin\{Q^i_{tsa'}(x)\ | \ a' \in [A]\}, \ \forall (i, t, s, a) \in [N] \times [T]\times [S]\times [A]. 
\end{equation}
\end{definition}
We consider solving for the Nash equilibrium using the potential game formulation given by
\begin{equation}\label{eqn:potential_game}
\begin{aligned}
  \textstyle   \min_{x^1,\ldots, x^N} & F(x^1,\ldots, x^N) & \text{s.t.} \  x^i \in \mc{X}(P^i, p^i), \ \forall \ i \in [N],
\end{aligned}
\end{equation}
where the objective  $F:\reals_+^{N\times(T+1)\times S\times A}\mapsto \reals$ is defined as 
\begin{multline}~\label{eqn:game_potential}
   \textstyle  F(x^1, \ldots, x^N)  =\sum_{i,t,s,a} x^i_{tsa}C^i_{tsa} +  \sum_{t,s} k\Big(\sum_{i,a} 2x^i_{tsa} + \prod_{i \in [N]} (1 - \sum_{a}x^i_{tsa}) + \sum_{a} \prod_{i \in [N]}(1 -x^i_{tsa})\Big). 
\end{multline}
\begin{lemma}\label{lem:potential_gradient}
The joint state-action distribution $x$ satisfies~\eqref{eqn:potential_game}'s first order KKT conditions if and only if it corresponds to a Nash equilibria of the MDP congestion game with costs $(\ell^1,\ldots, \ell^N)$~\eqref{eqn:congestion_cost}. 
\end{lemma}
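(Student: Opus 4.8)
The plan is to exploit the defining property of a potential game: that the partial gradient of $F$ with respect to player $i$'s variables reproduces exactly player $i$'s congestion cost, i.e. $\nabla_{x^i} F(x) = \ell^i(x)$. Once this is established, then because the constraint set $\mc{X}(P^i,p^i)$ decouples across players and (crucially, as noted after~\eqref{eqn:congestion_cost}) $\ell^i(x)$ does not depend on $x^i$, the joint KKT system factorizes into $N$ independent per-player systems, each of which is precisely the optimality condition of a single-player MDP linear program with fixed stage cost $\ell^i(x)$. The Nash condition~\eqref{eqn:nash} is then recovered from standard MDP/LP duality.

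First I would differentiate~\eqref{eqn:game_potential}. The cost term contributes $C^i_{tsa}$, the term $\sum_{i,a} 2 x^i_{tsa}$ contributes the constant $2k$, and the two product terms are handled with the identities $\prod_{j \neq i}(1 - \sum_{a'} x^j_{tsa'}) = 1 - D^i_{ts}(x)$ and $\prod_{j\neq i}(1 - x^j_{tsa}) = 1 - G^i_{tsa}(x)$ read off from~\eqref{eqn:DG_prob}. Differentiating $\prod_{i}(1 - \sum_a x^i_{tsa})$ with respect to $x^i_{tsa}$ isolates the $i$-th factor and yields $-(1 - D^i_{ts}(x))$; similarly, differentiating $\sum_a \prod_i (1 - x^i_{tsa})$ with respect to $x^i_{tsa}$ keeps only the single summand in $a$ and yields $-(1 - G^i_{tsa}(x))$. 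Collecting terms gives
\[
\frac{\partial F}{\partial x^i_{tsa}} = C^i_{tsa} + k\big(2 - (1 - D^i_{ts}(x)) - (1 - G^i_{tsa}(x))\big) = C^i_{tsa} + k\big(D^i_{ts}(x) + G^i_{tsa}(x)\big) = \ell^i_{tsa}(x),
\]
which establishes the potential property.

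Next I would write out the KKT conditions of~\eqref{eqn:potential_game}. Since the only constraints are the affine flow/consistency constraints defining $\mc{X}(P^i,p^i)$ together with $x^i \ge 0$, and these are separable over $i$, stationarity requires $\ell^i(x) = \nabla_{x^i} F(x)$ to be a linear combination of the gradients of the active constraints, alongside primal feasibility, dual feasibility, and complementary slackness on $x^i \ge 0$. Introducing multipliers $V^i_{ts}$ for the consistency constraints, I would identify $Q^i_{tsa}(x) = \ell^i_{tsa}(x) + \sum_{s'} P^i_{ts'sa} V^i_{(t+1)s'}$ and $V^i_{ts} = \min_{a'} Q^i_{tsa'}(x)$ — exactly the backward recursion~\eqref{eqn:q_value} with stage cost $\ell^i$ — so that dual feasibility becomes $Q^i_{tsa}(x) \ge V^i_{ts}$ and complementary slackness becomes $x^i_{tsa}(Q^i_{tsa}(x) - V^i_{ts}) = 0$. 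Together these give $x^i_{tsa} > 0 \Rightarrow Q^i_{tsa}(x) = \min_{a'} Q^i_{tsa'}(x)$, which is precisely~\eqref{eqn:nash}; reading the implications in reverse yields the converse. Because $\ell^i(x)$ is independent of $x^i$, fixing the opponents' strategies turns player $i$'s stationarity into exactly the LP optimality condition of a standard MDP, which legitimizes the identification of $V^i$ with the value function.

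The main obstacle is this third step: cleanly matching the Lagrange-multiplier and complementary-slackness conditions of the MDP linear program to the Bellman recursion~\eqref{eqn:q_value} and the support condition~\eqref{eqn:nash}. This requires care with the time indexing and the terminal ($t=T$) versus interior stages of the flow constraints, and invoking the standard equivalence between the stationary points of the MDP LP and the Bellman optimality conditions (e.g.~\cite[Chp.~4.2.1]{puterman2014markov}, as used in~\cite{li2022congestion,calderone2017markov}). The gradient computation, by contrast, is routine once the $D^i$ and $G^i$ identities are substituted, and the decoupling across players is immediate from the separability of the constraints together with the $x^i$-independence of $\ell^i$.
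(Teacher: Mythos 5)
Your proposal is correct and takes essentially the same approach as the paper: the heart of both arguments is the identical gradient computation showing $\partial F/\partial x^i_{tsa} = C^i_{tsa} + k\big(D^i_{ts}(x)+G^i_{tsa}(x)\big) = \ell^i_{tsa}(x)$, i.e., that $F$ is a potential for the game. The only difference is that the paper invokes~\cite[Thm.1.3]{calderone2017markov} as a black box for the resulting KKT--Nash equivalence, whereas you additionally sketch that equivalence via the MDP linear-programming duality argument (multipliers $V^i$, Bellman recursion, complementary slackness) that the cited theorem encapsulates.
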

\begin{proof}
From~\cite[Thm.1.3]{calderone2017markov}, the first order KKT conditions of~\eqref{eqn:potential_game} are equivalent to the Nash equilibrium condition if $F$'s gradients satisfy $\textstyle \partial F/\partial x^i_{tsa} = \ell^i_{tsa}(x)$ for all $(i,t,s,a)\in[N]\times[T]\times[S]\times[A]$. We compute $\textstyle \partial F/\partial x^i_{tsa}$ via~\eqref{eqn:game_potential}. With respect to $x^i_{tsa}$, the gradient of $\sum_{i,t,s,a} x^i_{tsa}C^i_{tsa}$ is $C^i_{tsa}$, the gradient of $\textstyle  k\sum_{i,t,s,a} 2x^i_{tsa}$ is $2k$, the gradient of $\textstyle \sum_{t,s} k \prod_{i \in [N]} (1 - \sum_{a}x^i_{tsa})$ is $\textstyle -k\prod_{j\neq i} (1 - \sum_{a}x^j_{tsa})$, and the gradient of $\textstyle  \sum_{t,s,a} k \prod_{i \in [N]} (1 - x^i_{tsa})$ is $\textstyle -k\prod_{j\neq i}(1 - x^j_{tsa})$. Their sum recovers $\ell^i_{tsa}$~\eqref{eqn:congestion_cost} for all $\textstyle (i,t,s,a) \in [N]\times\mc{T}\times[S]\times[A]$. 
\end{proof}
\textbf{Non-convexity}. Distinct from~\cite{li2022congestion,li2019tolling}, congestion costs~\eqref{eqn:congestion_cost} results in a non-convex and multilinear optimization objective~\eqref{eqn:game_potential}. However, the proposed Frank-Wolfe solution ~\cite{li2022congestion, li2019tolling} for finding the Nash equilibria will still converge sublinearly~\cite{lacoste2016convergence}. We refer to~\cite[Alg.1]{li2022congestion} for an algorithm outline.
\section{Stochastic Path Planning for Congested Air Traffic}\label{sec:flight_model}
Air traffic management operates under high operational uncertainty and strict collision risk requirements~\cite{shone2021applications}. Presently, air traffic authorities centrally plan deterministic trajectories and rely on human controllers to resolve local collision risks. 
We use the MDP congestion game model to embed the real-time operation uncertainty into path planning and find global collision risk-free trajectories. 

\textbf{Individual aircraft MDP}. We use an MDP to model deterministic flight plans under operational uncertainty. 
Aircraft $i$'s flight plan is $\textstyle \{(w^i_t, f^i_t) \ | t \in \mc{T}^i \}$, where $w^i_t$ are discrete waypoints used by the European Union Aviation Safety Agency (EASA), $f^i_t$ are discrete flight levels from 0 (sea level) to 450 ($45000$ feet) in increments of 50, and $\mc{T}^i$ are timestamps of the waypoints and flight-levels. 
\begin{figure}[ht!]
    \centering
    \includegraphics[width=0.24\columnwidth]{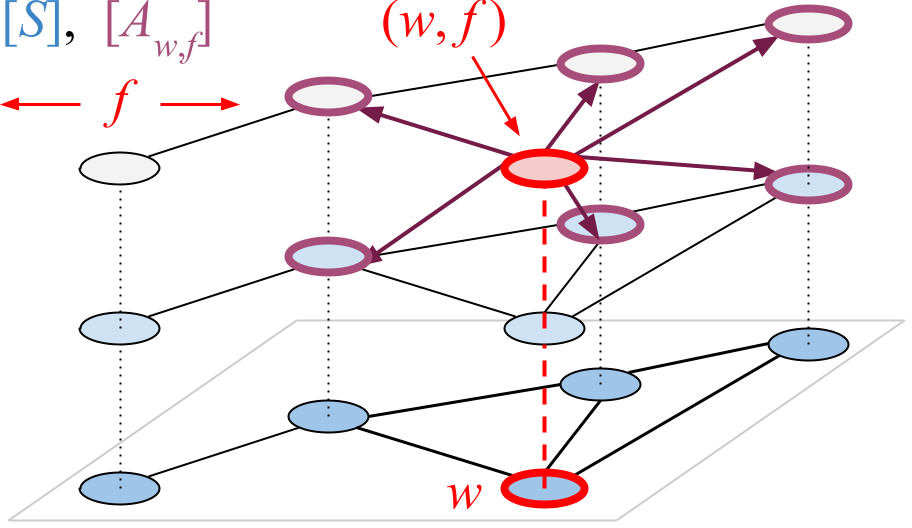}
    \
    \includegraphics[width=0.44\columnwidth]{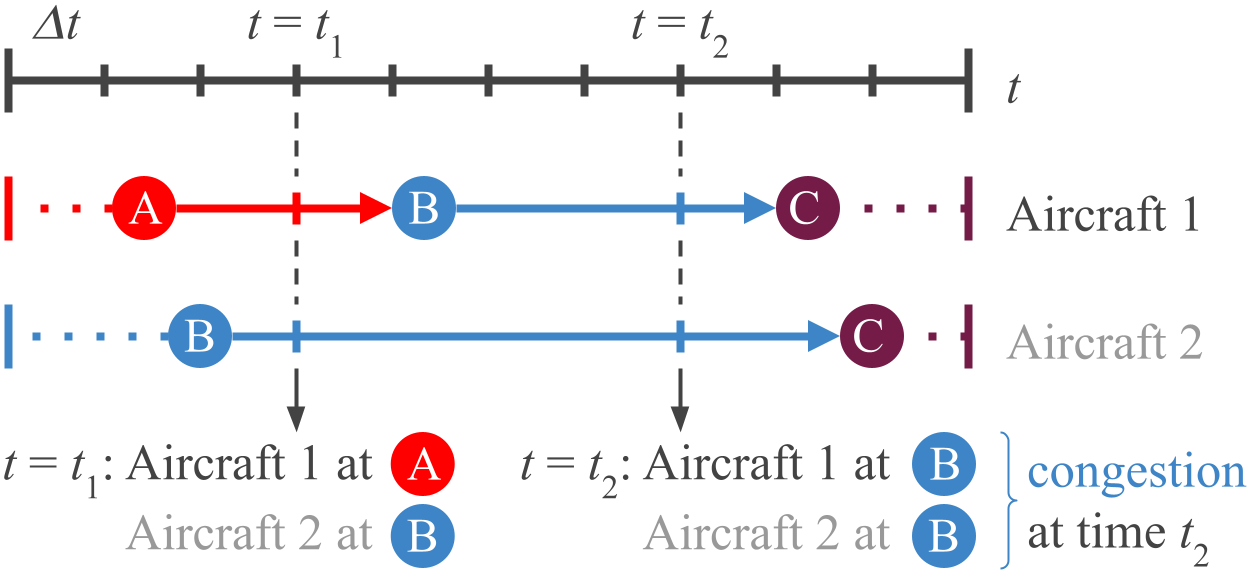}
    \
    \includegraphics[width=0.28\columnwidth]{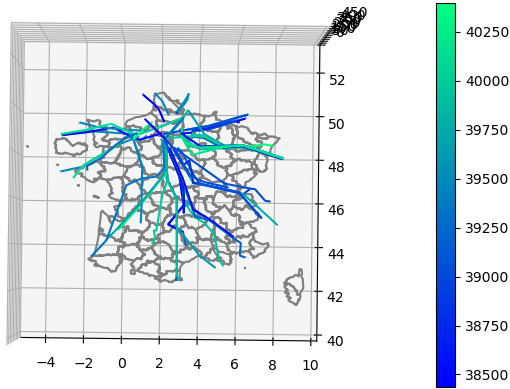}
    \caption{Left: Airspace state-action definitions. Center: Interval-based congestion computation. Right: Expected aircraft trajectories without congestion costs $D^i$~\eqref{eqn:DG_prob}. Colors correspond to time. }
    \label{fig:time_discretization}
\end{figure}

\textbf{Time horizon}. Aircraft $i$'s time horizon is given by $\textstyle \mc{T}^i \cup \{t_L + b\Delta t_{int}\ | 0\leq b \leq B\}$, where $\mc{T}^i$ is from the flight plan, $t_L$ is the planned landing time, and $\Delta t_{int}, B\in \mathbb{N}$ are user-defined parameters.

\textbf{States}. Each state $(w, f) \in [S]$ consists of a waypoint $w$ and a flight level $f$, as shown in Figure \ref{fig:time_discretization}. 

\textbf{Actions}. At state $\textstyle(w,f)$, actions correspond to reaching one of $(w,f)$'s neighbors in the next time step. The set of neighbors is given by $\textstyle \mc{N}(w,f) = \big\{(w', f') \ | \ w' \in \mc{N}(w),  \ f' \in \{f-50, f, f+50\}, \ 0 \leq f'\leq 450 \big\}$, 
    where $\mc{N}(w)$ is the set of reachable waypoints from $w$.  Aircraft cannot loiter at $(w,f)$. The action of going to $(w', f')$ is $\textstyle a_{w',f'}$, such that $\textstyle [A_{w,f}] = \{a_{w',f'} \ | \ (w', f') \in \mc{N}(w,f)\}$.
    
\textbf{Transition Dynamics}. Under action $a_{w', f'}$ from $(w, f)$, an aircraft has $\textstyle \beta$ probability of reaching $(w', f')$ and $\textstyle 1 - \beta$ probability of diverting to another state in $\mc{N}(w,f)$. 
    
 \textbf{Cost}: Each state-action pair $(w, f, a_{w',f'})$ has a flight-dependent \textbf{deviation cost}, given by
    \begin{equation}
       \textstyle   C^i_{t, w, f, a} = d(w^i_t, w) + \alpha_f | f - f^i_t| + L(t, w, f), \ \forall (w,f) \in [S], a \in [A_{w,f}],
    \end{equation}
    where $(w^i_t, f^i_t)$ is aircraft $i$'s planned location at $t$, $d(v, w)\in \mathbb{N}$ is the number of edges between $v$ and $w$, $\alpha_f \in \reals$ is user-defined parameter, and $L:\mc{T}^i\times[S]\mapsto \reals$ is a tardiness cost. 
    If the aircraft plans to land at $(w_T, f_T, T)$, then $\textstyle L(t, w, f) = 0$ if $\textstyle (w,f) = (w_T, f_T)$ or $\textstyle t \leq T$, else $\textstyle L(t, w, f) = c_{tardy}(t-T)$. 
    The expected cost under the flight plan is zero and strictly positive otherwise. Therefore, aircraft are inclined to follow the flight plan in the absence of congestion. 

Based on the individual aircraft MDP model, we build an MDP congestion game for the air traffic plan over France on July $3^{rd}$, $2017$. Between timestamps $39000$ and $41000$, $75$ planes left the Paris airports CDG and ORY to various destinations as shown in Figure~\ref{fig:time_discretization}. The collision risks $D^i$ and $G^i$~\eqref{eqn:DG_prob} can be interpreted as standard aircraft radial/vertical separation and longitudinal separation~\cite{brooker2006longitudinal}, respectively. In our simulations, only $D^i$ increases congestion costs.

\textbf{Interval-based collision risk computation}. Since each aircraft's time stamp is unique, we compute the congestion for time intervals. As shown in Figure~\ref{fig:time_discretization}, aircraft whose time stamp fall into the interval $\textstyle [t_k, t_k + \Delta t_{cong})$ will contribute to the congestion in time interval $k$.  
    
\begin{figure}[ht!]
    \centering
    \includegraphics[width=0.45\columnwidth]{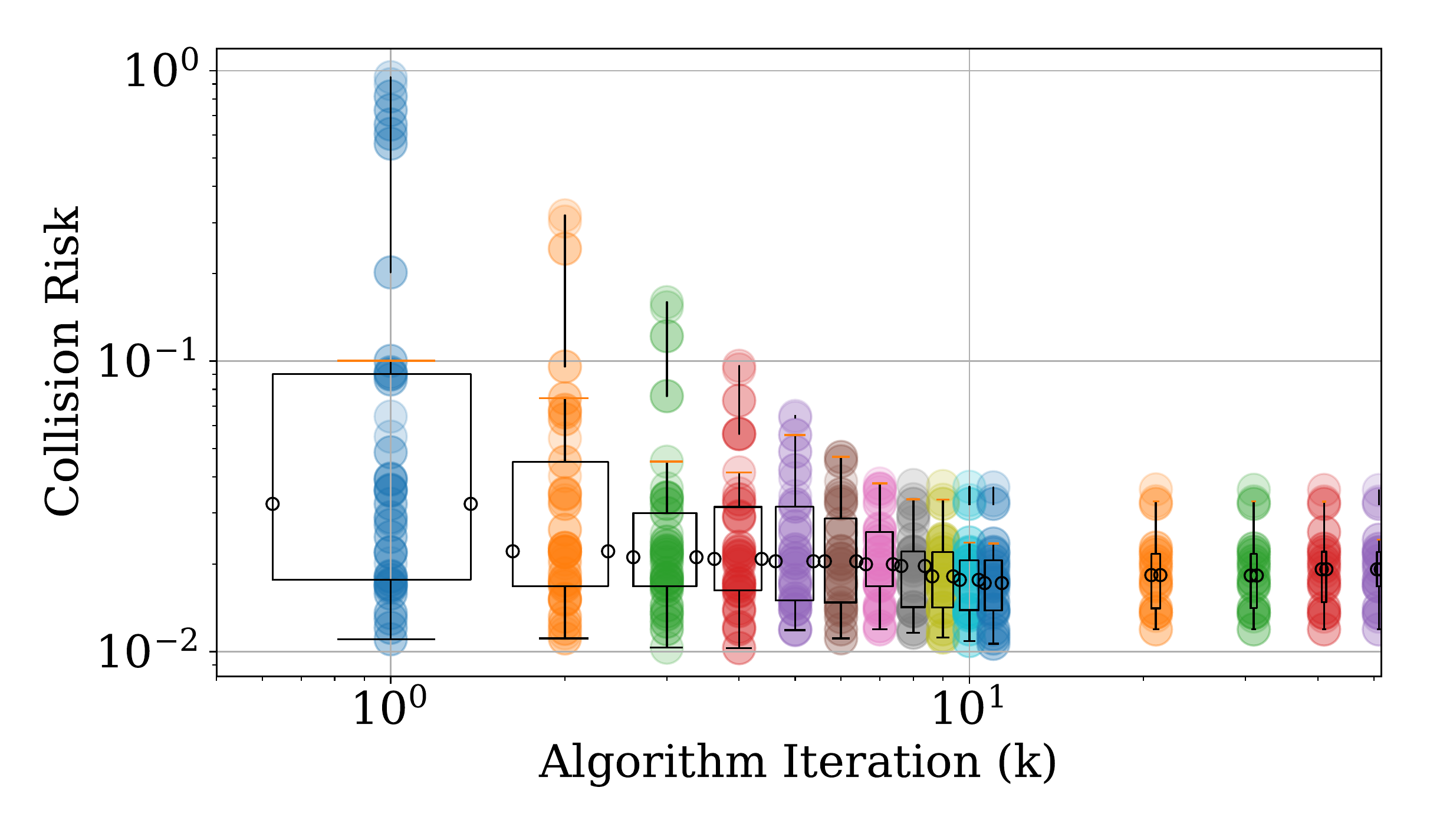}
    \
    \includegraphics[width=0.45\columnwidth]{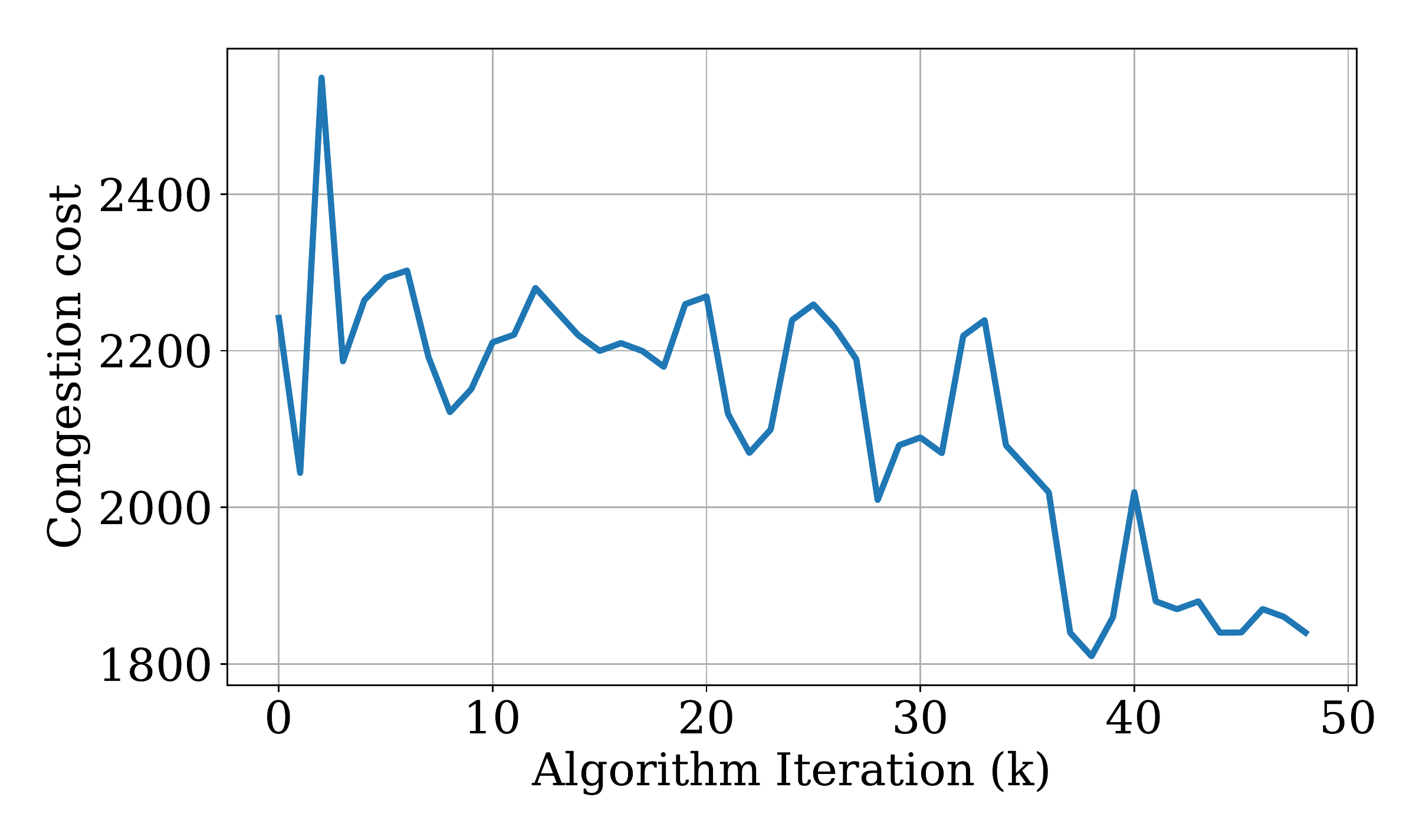}
    \caption{Left: Collision risk as a function of Frank-Wolfe algorithm iteration. Right: Congestion cost $\textstyle \sum_{i,t,s,a}kD_{tsa}^i(x)$~\eqref{eqn:DG_prob} as a function of Frank-Wolfe algorithm iteration. }
    \label{fig:FW_output}
\end{figure}

\textbf{Results and discussion}. We build the individual MDP and the interval-based congestion costs with the following user-defined parameter values: $\Delta t_{int} = 300$, $B = 3$, $\beta = 0.95$, $\alpha_f = 10$, $c_{tardy} = 2$, $\Delta t_{cong} = 19$, and $k = 10$. First, we verify that when solved without congestion cost $D^i$, all individual MDPs result in expected trajectories that match the original flight plan. The results are shown in Figure~\ref{fig:time_discretization}. We then define collision risk as $\sum_{a}x^i_{tsa}D^i_{tsa}(x)$, and found that for multiple flights, the maximum collision risk at any time was greater than $10\%$. The overall spread of collision risks for the original flight plan is shown on the  $x = 10^0$ line in Figure~\ref{fig:FW_output} left. We then augment individual costs with congestion cost $D^i$ and solve for the Nash equilibrium via the Frank Wolfe algorithm from~\cite[Alg.1]{li2022congestion}. The resulting collision risks and objective values are shown in Figure~\ref{fig:FW_output}. In the right figure, we see that the objective value decreases from $2200$ to $1900$ within the first $50$ iterations. Accompanying this, we observe that the maximum collision risks drops from $94\%$ to around $3\%$ within the first $10$ iterations of the Algorithm. Therefore, we conclude that our model was effective in reducing uncertainty-induced collision risks.

\section{Conclusion}
We derived an $N$-player MDP congestion game in which players solve MDPs that are coupled to the opponents through collision risk. We showed that its Nash equilibria are the KKT points of a potential minimization problem, and applied our model to collision risk reduction for commercial aircraft under operational uncertainty. Future work includes analyzing effect on flight delays.  

\clearpage


\bibliography{example}  

\end{document}